\newtheorem{theorem}{Theorem}
\newtheorem{lemma}{Lemma}
\newtheorem{definition}{Definition}
\title{\LARGE \bf
Exactly Optimal Quickest Change Detection of Markov Chains
}
\author{Jason J. Ford, Justin M. Kennedy, Caitlin Tompkins, Jasmin James, Aaron McFadyen
\thanks{J. J. Ford, J. M. Kennedy, and A. McFadyen are with the School of Electrical Engineering and Robotics, Queensland University of Technology, 2 George St, Brisbane QLD, 4000 Australia. 
C. Tompkins was with the School of Electrical Engineering and Robotics, Queensland University of Technology.
J. James is with the School of Mechanical \& Mining Engineering, University of Queensland, Brisbane QLD, 4072 Australia.
The authors acknowledge continued support from the Queensland University of Technology (QUT) through the Centre for Robotics.
        {\tt\footnotesize j2.ford@qut.edu.au, j12.kennedy@qut.edu.au, 
        caitlin.tompkins@gmail.com,
        jasmin.martin@uq.edu.au, aaron.mcfadyen@qut.edu.au}}%
}
\begin{document}

\maketitle
\thispagestyle{empty}
\pagestyle{empty}

\begin{abstract}
This paper establishes that an exactly optimal rule for Bayesian Quickest Change Detection (QCD) of Markov chains is a threshold test on the no change posterior.
We also provide a computationally efficient scalar filter for the no change posterior whose effort is independent of the dimension of the chains.
We establish that an (undesirable) weak practical super-martingale phenomenon can be exhibited by the no change posterior when the before and after chains are too close in a relative entropy rate sense. 
The proposed detector is examined in simulation studies. 
\end{abstract}

\section{Introduction}
Quickest change detection (QCD) problems consider the detection of a change in the statistical properties of an observed process and occur in a wide variety of applications including quality control \cite{nikiforov}, anomaly detection \cite{Vaswani}, statistics \cite{tartakovsky2019asymptotically}, target detection \cite{James2019} and fault detection \cite{nikiforov}. In such problems, a sequence of quantitative measurements are monitored to extract information of the current properties of an observed process with the desire to detect a change as quickly as possible after it occurs subject to a false alarm constraint \cite{Basseville}.

Several formulations for QCD problems exist which vary in assumptions around the change point and optimality criteria used for design.
Some of the earliest formulations, now termed Bayesian formulations, were developed by Shiryaev assuming that the change point is a random variable with a known geometric prior and observations are independent and identically distributed (i.i.d.) \cite{ShiryaevOS}.
Under these assumptions, Shiryaev was able to establish an optimal rule  comparing the posterior probability of the change event against a threshold. 

More recently the Bayesian formulation has been extended to encompass non-geometrically distributed change-times \cite{Tartakovsky05,KrishnamurthyPhaseTime} and dependent data (i.e., non-i.i.d. observations) \cite{Dayanik,Tartakovsky05,Fuh}. 
However, despite these various formulations appearing in the literature, establishing optimal detection rules for dependent data and arbitrary change-time distributions remains a challenging problem.
In  \cite{Tartakovsky05} the authors considered the general non-i.i.d. case, and  demonstrated asymptotic optimality for Shiryaev's rule, with similar shown in \cite{Fuh} for Hidden Markov models (HMMs) under some regularity conditions.
An $\epsilon$-optimal solution approach for joint quickest detection and isolation problem for Markov modulated process is examined in \cite{Dayanik}, which the authors point out include Markov chains   as special case.
Recently (exact) optimal Bayesian QCD for HMM results have been established \cite{Ford2022OptimalBayesianQuickest}.

Markov chains play a fundamental role in a  wide variety of fields including susceptible-infectious-recovered (SIR) and Susceptible-Infectious-Susceptible (SIS) epidemic models of infectious diseases \cite{SIRSIS2011}, models of rumour spreading \cite{Rumour2017}, social media anomaly detection \cite{SocialMedia2016} and air traffic management \cite{AaronMC}. 
Despite being a fundamental model class, Bayesian QCD of Markov chains has not yet been fully resolved with progress being limited to the early work that established optimality of a threshold rule dependent on the current state of the Markov chain \cite{yakir}, and the $\epsilon$-optimal HMM QCD convex stopping sets results of \cite{Dayanik}.
Further, neither of these results provide an  efficient computation method to implement their rules.

In this paper we present an exactly optimal solution for Bayesian QCD of Markov chains (which to date has only been shown $\epsilon$-optimally or with limiting dependencies). 
We also provide algorithms for efficient calculation of the optimal solution.
Finally, we  provide novel insights into numerical properties of the optimal solution when the before and after Markov chains are insufficiently separated and an undesirable phenomenon occurs.
Specifically, the key contributions of the paper are:
\begin{enumerate}
    \item Establishing an exactly optimal stopping rule for Bayesian QCD of Markov chains
    that is a simple threshold test of conditional posterior information. 
    \item Providing an efficient scalar filter solution to implement the stopping rule.
    \item Establishing that when the before and after change Markov chains are insufficiently separated, in the sense of relative entropy rate, a weak practical super-martingale phenomenon can emerge.
\end{enumerate}

The paper is structured as follows: In Section 
\ref{sec:problem} we pose the problem of Bayesian QCD of Markov chains, before presenting the optimal rule and a computationally efficient solution in Section \ref{sec:main}.
In Section \ref{sec:separated}, we investigate what occurs when the Markov chains are insufficiently separated and establish the existence of a weak practical super-martingale phenomenon.
In Section \ref{sec:results}, we present an illustrative example.
We present some brief conclusions in Section \ref{sec:con}.

\section{Problem Formulation}\label{sec:problem}
In this section we pose the problem of Bayesian QCD for Markov Chains. 
\subsection{State Process}
Let us first define a space $S \triangleq \{e_1, \dots, e_{N}\}$ where $e_i \in \mathbb{R}^{1 \times N}$ are indicator vectors with 1 in the $i$th element and zeros elsewhere, where $N$ is the number of possible values of the Markov chain.
For $k\ge0$, we consider a process $X_k \in S$ whose statistical properties change at some time $\nu\ge1$.
For $k<\nu$, $X_k \in S$ can be modelled as a first-order time-homogeneous Markov chain described by the \emph{before}-change transition probability matrix with elements $A_b^{i,j} \triangleq P(X_{k+1} =e_i | X_{k} =e_j)$ for $1\le i,j \le N$. 
For $k\ge\nu$, $X_k \in S$ can be modelled as a different first-order time-homogeneous Markov chain described by the \emph{after}-change transition probability matrix with elements $A_a^{i,j} \triangleq P(X_{k+1} =e_i | X_{k} =e_j)$ for $1\le i,j \le N$. 
For simplicity of presentation, we assume throughout that both $A_b$ and $A_a$ are aperiodic and irreducible.
We assume the initial distribution for $X_0$ is known as $p(X_0)$, and that the process $X_k$ is observed and let $X_{[0,k]} \triangleq \{X_0 , \ldots, X_k\}$ be shorthand for the chain sequence until time $k$.

\subsection{Probability Measure Space Construction}
We will now follow the construction of \cite{Ford2022OptimalBayesianQuickest} and introduce a probability measure space that allows us to formally state our Bayesian QCD for Markov chains problem. Let $\mathcal{F}_k=\sigma(X_{[0,k]})$ denote the filtration generated by $X_{[0,k]}$. 
We consider a probability measure space $(\Omega, \mathcal{F}, P_\nu)$ where $\Omega$ is sample space of sequences of $X_{[0,\infty]}$, $\sigma$-algebra $\mathcal{F} = \cup_{k=1}^\infty \mathcal{F}_k$ with the convention that $\mathcal{F}_0 = \{0,\Omega\}$, and $P_{\nu}$ is the probability measure constructed using Kolmogorov's extension on the following probability density function of the state sequence
\begin{align*}
  p_\nu(X_{[0,k]}) &= \Pi_{\ell=\nu+1}^{k} A_a^{\zeta(X_{\ell}),\zeta(X_{\ell-1})} \nonumber \\ & \times \Pi_{\ell=1}^{\min(\nu,k)} A_b^{\zeta(X_{\ell}),\zeta(X_{\ell-1})} p(X_0)  
\end{align*}
where $\zeta(e_i) \triangleq i$ returns the index of the non-zero element of an indicator vector $e_i$, and we define $\Pi_{\ell=\nu+1}^{k} A_a^{\zeta(X_{\ell}),\zeta(X_{\ell-1})} \triangleq 1$ if $k < \nu+1$, and $\Pi_{\ell=1}^{\min(\nu,k)} A_b^{\zeta(X_{\ell}),\zeta(X_{\ell-1})} =1$ if $\nu=1$.
We highlight that $P_a$ and $P_b$ correspond to the special case measures corresponding to transitions according to $A_a$ or $A_b$ at all time, respectively, and let $E_a$ and $_b$ denote the corresponding expectation operations.

It will later be useful to note that the relative entropy rate between two Markov chains with transition probability matrices $A_b$ and $A_a$ can be shown to be given by \cite{Xie2005} (if $A_b$ is irreducible):
\begin{equation*}
    \mathcal{R}(A_b|A_a)=\sum_{i=1}^N \sum_{j=1}^N 
    \tilde{a}_b^j
    A_b^{i,j} \log\left(\frac{A_b^{i,j} }{A_a^{i,j} }\right)
\end{equation*}
where $\tilde{a}_b$ is the invariant stationary distribution of $A_b^{i,j}$.

\subsection{Change Time Prior}
In the Bayesian QCD problem considered in this paper, the change time $\nu\ge 1$ is an unknown random variable having a prior distribution $\pi_k=P(\nu=k)$.
This allows us to construct an average measure $P_\pi(G)= \sum_{k=1}^\infty \pi_k (G) P_k(G)$ for all $G\in \mathcal{F}$ and we let $E_\pi$ denote the corresponding expectation operation. In this work we assume a prior geometry in nature in that $\pi_k=(1-\rho)^{k-1} \rho$, with $\rho \in (0,1)$, as introduced by Shiryeav \cite{Shiryaev}.

\subsection{Cost Formulation}
We can now state our QCD problem as seeking to quickly detect a change in the statistical properties of $X_k$ in the sense of designing a  stopping time $\tau \ge 1$ with respect to the filtration generated by $X_{[0,k]}$
that minimises the following cost (Bayes risk)
\begin{equation}
    J(\tau) \triangleq c E_\pi \left[(\tau-\nu)^+ \right] + P_\pi(\tau < \nu) \label{equ:cost}
\end{equation}
where $(\tau-\nu)^+ \triangleq \max(0, \tau-\nu)$ and $c$ is the penalty of each time step that alert is not declared after $\nu$.

By exploiting the recent QCD for HMM results of \cite{Ford2022OptimalBayesianQuickest} this paper extends the partial results of \cite{yakir} to establish an exactly optimal rule for QCD of Markov chains. 
Importantly, we show this optimal rule can be elegantly achieved through the efficient calculations of a scalar filter.
We also establish new insufficiently informative results in the Markov chains QCD setting inspired by 
recent {\it i.i.d.} QCD results \cite{FORD2020OnInformativenessofMeasurements}.

\section{Main Result}\label{sec:main}
In this section we present our main results for Bayesian QCD of Markov chains. We first present an augmented state representation before establishing the exactly optimal solution. We then provide an efficient calculation of the optimal solution.

\subsection{Exactly Optimal Solution}
Let us first define a new space $\bar{S} \triangleq \{\bar{e}_1, \dots, \bar{e}_{2N}\}$ where $\bar{e}_i \in \mathbb{R}^{1 \times 2N}$ are indicator vectors with 1 in the $i$th element and zeros elsewhere, and let us consider an augmented process $Z_k$.
Then for $k < \nu$, $Z_k \in \bar{S}$ is defined as 
\begin{equation*}
Z_k \triangleq \left[ 
    \begin{array}{c}
        X_k  \\ 0_{1 \times N}
    \end{array} 
    \right]
\end{equation*}
and for $k \ge \nu$, $Z_k \in \bar{S}$ is defined as 
\begin{equation*}
Z_k \triangleq \left[ 
    \begin{array}{c}
        0_{1 \times N} \\ X_k 
    \end{array}
    \right]
\end{equation*}
where $0_{1 \times N} $ is matrix of zeros of size $1 \times N$.
We note that $Z_k$ is only indirectly 
observed via the measured Markov chain $X_k$, and we later show $(X_k,Z_k)$ can be considered a hidden Markov model.

Let us consider a vector of conditional posterior probability having elements $\hat{Z}_k^i\triangleq P(Z_k=\bar{e}_i | X_{[0,k]})$ for all $i$ in $1 \le i \le 2N$ and define conditional posterior probability of before and after change as 
$\hat{M}_k^{b} \triangleq \sum_{i=1} ^ {N} \hat{Z}_k^i$ 
(noting that $\hat{M}_k^{b} + \sum_{i=N+1} ^ {2N} \hat{Z}_k^i = 1$.)

\begin{theorem}
\label{thm:optimalstoprule}
For the cost criteria \eqref{equ:cost} the exactly optimal stopping rule $\tau^*$ is given as 
\begin{equation}
    \tau^*\triangleq \{k \ge 1 : \hat{M}_k^{b} \le h \}
    \label{equ:stoprule}
\end{equation}
for some threshold value $h\in [0,1]$.
\end{theorem}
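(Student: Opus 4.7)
The plan is to reduce the Bayesian QCD problem for Markov chains to the recently resolved exactly optimal HMM QCD problem of \cite{Ford2022OptimalBayesianQuickest}, using the augmented state process $Z_k$ as the hidden state of an HMM whose ``observations'' are $X_k$. The theorem statement already hints at this by describing $(X_k, Z_k)$ as an HMM, so the proof strategy is to make this identification precise and then invoke the HMM optimality result directly.

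First, I would verify that $(Z_k, X_k)$ forms a hidden Markov model whose joint law under the geometric prior $\pi$ matches $P_\pi$. The hidden state $Z_k \in \bar{S}$ evolves as a $2N$-state time-homogeneous Markov chain with a block-structured $2N \times 2N$ transition matrix built from $A_b$, $A_a$, and $\rho$: from a state in the first $N$-block, $Z_{k+1}$ stays in the first block with probability $1-\rho$ with chain transition governed by $A_b$, or moves to the second $N$-block with probability $\rho$ with chain transition governed by $A_a$; from a state in the second $N$-block, $Z_{k+1}$ stays in that block and transitions according to $A_a$. The observation kernel is the deterministic projection $X_k = [I_N \; I_N] Z_k^\top$ (in indicator-vector sense), collapsing the two blocks. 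A short marginalisation argument would show that the resulting joint density of $(Z_{[0,k]}, X_{[0,k]})$ integrates to the law $P_\pi$ defined via $\pi_k = (1-\rho)^{k-1}\rho$ and the density $p_\nu(X_{[0,k]})$ in the problem formulation, and that $\nu$ coincides with the hitting time of the second $N$-block.

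Second, I would identify the cost \eqref{equ:cost} with the Bayes risk in \cite{Ford2022OptimalBayesianQuickest}. Because $\nu$ in our formulation is exactly the first entry time of $Z_k$ into the after-change block, and because stopping times adapted to $\sigma(X_{[0,k]})$ in our problem are in one-to-one correspondence with stopping times adapted to the observation filtration of the constructed HMM, the detection-delay term $cE_\pi[(\tau-\nu)^+]$ and the false-alarm term $P_\pi(\tau < \nu)$ have the same form in both formulations. Invoking the HMM optimality result then yields that the exactly optimal $\tau^*$ is a threshold test on the no-change posterior $P(Z_k \in \{\bar{e}_1,\dots,\bar{e}_N\} \mid X_{[0,k]})$, which by definition equals $\hat{M}_k^b = \sum_{i=1}^N \hat{Z}_k^i$, delivering the claimed rule \eqref{equ:stoprule}.

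The main obstacle I anticipate is technical: the observation kernel of the constructed HMM is \emph{deterministic}, which may not strictly satisfy the regularity hypotheses (e.g., dominating-measure/density assumptions on the observation likelihood) typically invoked in HMM QCD optimality theorems. I would need to either verify that the Ford~2022 framework admits deterministic observation kernels as a special case, or apply a smoothing/limiting argument: perturb the observation model with a small, strictly positive additive noise, invoke the HMM optimality theorem on the perturbed problem, and then pass to the limit as the perturbation vanishes using continuity of the Bayes risk and of the posterior $\hat{M}_k^b$ in the perturbation parameter. Once this technical point is addressed, the rest of the argument is essentially a translation of notation.
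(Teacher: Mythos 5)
Your proposal is correct and takes essentially the same route as the paper: the paper's proof likewise casts the problem into the framework of \cite{Ford2022OptimalBayesianQuickest} by taking $S_b = S_a = S$ with transition matrices $A_b$, $A_a$, $A_\nu = A_b$, identifying their observation $y_k$ with $X_k$ under the deterministic emission $b_b(y_k=e_i^b,e_j^b)=1$ iff $i=j$ (your $[I_N \; I_N]$ projection), and then invoking Theorem 1 of that work. The paper treats this deterministic emission matrix as a directly admissible special case of the HMM framework, so the perturbation/limiting fallback you anticipate is not needed there.
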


\begin{proof}
We note in the notation of \cite{Ford2022OptimalBayesianQuickest}, consider the before change and after change spaces $S_b = S$ and $S_a = S$, with corresponding transition probability matrices $A_b$ and $A_a$, and transition matrix $A_\nu = A_b$.  
Further consider their $y_k$ to be our $X_k$ in this paper, and set $b_b(y_k=e_i^b,e_j^b)=1$ if $i=j$ and 0 otherwise, and
$b_a(y_k=e_i^a,e_j^a)=1$ if $i=j$ and 0 otherwise. 
Then noting Theorem 1 of \cite{Ford2022OptimalBayesianQuickest} applies and the above theorem claim holds.
\end{proof}

This Theorem establishes that an exactly optimal stopping rule for Bayesian QCD of Markov chains is a simple threshold test on the conditional no change posterior information.
This result is stronger than previous QCD results for Markov chains which established an 
optimal stopping rule as a comparison of the no change posterior against a threshold having possible dependence on the current state of the Markov chain \cite{yakir}, as well as those that could be developed via the asymptotic HMM QCD results of \cite{Fuh} or the $\epsilon$-optimal HMM QCD convex stopping sets results of \cite{Dayanik}.

\subsection{Efficient Calculation of Optimal Solution}
We now investigate how to efficiently implement the optimal stopping rule through re-casting this calculation through the augmented hidden Markov model.
For that purpose, let us defined the emission matrix with elements $\mathcal{B}^{i,j} \triangleq P(X_k=e_i | Z_k = \bar{e}_j)$ for $1 \le i \le N$ and $1 \le j \le 2N$ and define a transition probability matrix with elements $A^{i,j} \triangleq P(Z_{k+1} =\bar{e}_i | Z_{k} =\bar{e}_j)$ for $1\le i,j \le 2N$.

\begin{lemma} \label{lemma:augmentedHMM}
The ($X_k$,$Z_k$) are the observation and state process for a hidden Markov model with an emission matrix $\mathcal{B} \in R^{N \times 2N}$ with elements
\begin{equation*}
    \mathcal{B}^{i,j}= \left\{ 
    \begin{array}{cc}
        1 & \mbox{if } i=j  \mbox{ or } i=N+j \\
        0 & \mbox{otherwise,}
    \end{array}
    \right.
\end{equation*}
or equivalently $\mathcal{B}=[I_{N \times N}\; I_{N \times N}]$,  and transition probability matrix
\begin{equation*}
    A = \left[
    \begin{array}{cc}
        (1-\rho) A_b  & 0_{N \times N} \\
        \rho A_b & A_a
    \end{array}
    \right]
\end{equation*}
where $I_{N \times N}$ is identity matrix of size $N \times N$ and $0_{N \times N} $ is the zero matrix of size $N \times N$.
\end{lemma}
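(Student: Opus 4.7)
The plan is to verify the two structural claims directly from the definition of $Z_k$ and the probability measure $P_\pi$ on the state sequence. First I would handle the emission matrix, which is essentially definitional, and then carry out a case analysis on the block structure of the transition matrix $A$, using the memoryless property of the geometric prior at each step.

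For the emission matrix, the key observation is that $X_k$ is a deterministic function of $Z_k$: if $Z_k = \bar{e}_j$ with $j \le N$, then by construction $k < \nu$ and $X_k = e_j$, while if $j > N$, then $k \ge \nu$ and $X_k = e_{j-N}$. This immediately gives $P(X_k = e_i \mid Z_k = \bar{e}_j) \in \{0,1\}$ with the stated pattern of ones, i.e. $\mathcal{B} = [I_{N\times N}\;\; I_{N\times N}]$. No further calculation is required.

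For the transition matrix, I would compute $A^{i,j} = P(Z_{k+1} = \bar{e}_i \mid Z_k = \bar{e}_j)$ by conditioning on the regime encoded in $j$ and then on whether the change time equals $k+1$. If $j \le N$, then the conditioning event implies $\nu > k$ and $X_k = e_j$; by the memoryless property of the geometric prior, $P(\nu = k+1 \mid \nu > k) = \rho$ and $P(\nu > k+1 \mid \nu > k) = 1-\rho$. In either sub-case the $X$-transition from step $k$ to $k+1$ uses $A_b$ (since the density formula uses $A_b$ whenever $k+1 \le \nu$), but $Z_{k+1}$ lies in the before-block when $\nu > k+1$ and in the after-block when $\nu = k+1$. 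Combining these gives the left block column $[(1-\rho)A_b;\; \rho A_b]$ of $A$. If instead $j > N$, then $\nu \le k$, so $k+1 > \nu$, the $X$-transition uses $A_a$, and $Z_{k+1}$ must stay in the after-block; this yields the right block column $[0;\; A_a]$.

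To justify that $(X_k, Z_k)$ really is a hidden Markov model, I would note that the Markov property of $Z_k$ follows because conditioning on $Z_k$ fixes both $X_k$ and the indicator of $\{k < \nu\}$; by the Markov property of the underlying chain together with the memorylessness of $\pi$, the future $Z_{k+1}, Z_{k+2}, \ldots$ depends on the past only through this pair. The emission being conditionally independent of history given $Z_k$ is immediate from the deterministic nature of $\mathcal{B}$. The main technical pitfall, and the one I expect to require the most care, is the bookkeeping at the change epoch itself: at $\nu = k+1$ the transition of $X$ is governed by $A_b$ but the target state $Z_{k+1}$ already sits in the after-change block, which is exactly what places $\rho A_b$ (rather than $\rho A_a$) in the lower-left block of $A$.
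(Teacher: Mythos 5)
Your proof is correct, but it takes a genuinely different route from the paper's. The paper proves this lemma purely by reduction: it sets up a dictionary into the notation of the HMM QCD framework of Ford et al.\ (taking $S_b=S_a=S$, $A_\nu=A_b$, and identity-like emission kernels $b_b,b_a$) and then invokes Lemma~2 of that reference to read off $\mathcal{B}$ and $A$. You instead give a self-contained first-principles derivation: the emission matrix from the observation that $X_k$ is a deterministic function of $Z_k$, and the transition matrix by a case split on the block of $Z_k$ combined with memorylessness of the geometric prior, $P(\nu=k+1\mid\nu>k)=\rho$. Your case analysis is consistent with the paper's measure construction, in which the product of $A_b$ factors runs up to $\ell=\min(\nu,k)$, so the transition \emph{into} time $\nu$ is governed by $A_b$; this is exactly why the lower-left block is $\rho A_b$ rather than $\rho A_a$, and it matches the paper's choice $A_\nu=A_b$ in its reduction. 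Your sketch of why $(X_k,Z_k)$ is a bona fide HMM (conditioning on $Z_{[0,k]}$ determines $X_k$ and whether $\nu>k$, which with the Markov property of $X$ and memorylessness of $\pi$ screens off the past) is the right argument and is essentially what the cited Lemma~2 encapsulates. What the paper's approach buys is brevity and an explicit link to the HMM framework that is reused in the proof of the optimal stopping rule; what yours buys is a proof the reader can check without consulting the earlier work, and it surfaces the one genuinely delicate piece of bookkeeping (the change-epoch transition) that the citation-based proof hides. As a minor remark, your derivation yields the block form $\mathcal{B}=[I_{N\times N}\;\,I_{N\times N}]$, which is the correct statement; the elementwise condition printed in the lemma (``$i=j$ or $i=N+j$'') should read $j=N+i$ for the second case, a typo your direct computation implicitly corrects.
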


\begin{proof}
We note in the notation of \cite{Ford2022OptimalBayesianQuickest}, as above, consider the before change and after change spaces $S_b = S$ and $S_a = S$, with corresponding transition probability matrices $A_b$ and $A_a$, and transition matrix $A_\nu=A_b$.  
Further consider their $y_k$ to be our $X_k$ in this paper, and set $b_b(y_k=e_i^b,e_j^b)=1$ if $i=j$ and 0 otherwise, and
$b_a(y_k=e_i^a,e_j^a)=1$ if $i=j$ and 0 otherwise. 
Then Lemma 2 of \cite{Ford2022OptimalBayesianQuickest} applies giving the lemma result here.
\end{proof}

The importance of Lemma 1 is that it establishes the conditional posteriors $Z_k^i$ and hence $\hat{M}_k^b$ can be efficiently calculated using a HMM filter as follows.
Let $\mathcal{B}^{i,.}$ denote the $i$th row of $\mathcal{B}$, and define the diagonal matrix $ {B} (X_k)\triangleq \mbox{diag}( \mathcal{B}^{\zeta(X_k),.}) \in \mathcal{R}^{2N \times 2N}$ and note this is a sparse diagonal matrix with 2 non-zero elements
\begin{equation*}
    {B} (X_k)^{i,i}
    = \left\{ 
    \begin{array}{cc}
        1 & \mbox{if } X_k=i  \mbox{ or } X_k=N+i \\
        0 & \mbox{otherwise}.
    \end{array}
    \right.
\end{equation*}

Then, for $k>0$, $\hat{Z}_k$ can be calculated using the HMM filter \cite{elliott1995}:
\begin{equation}
    \hat{Z}_{k} = N_k {B}(X_k) A \hat{Z}_{k-1}
    \label{equ:HMM}
    \end{equation}
where $N_k \triangleq \langle 1, {B} (X_k) A   \hat{Z}_{k}\rangle ^{-1}$ is a normalisation factor, and  $\hat{Z}_0=[\hat{X}_0',0_{1 \times N}']'$.

The following (perhaps) surprisingly efficient scalar filter implementation holds where the computation effort is independent of the size of the chains $N$.
\begin{lemma} \label{lem:efficient}
For $k>0$, the conditional no change posterior probability,
$\hat{M}_k^{b}$, can efficiently be calculated using the following scalar recursion
\begin{equation}
    \hat{M}_k^{b} = N_k (1-\rho) A_b^{\zeta(X_{k}),\zeta(X_{k-1})} \hat{M}_{k-1}^{b} 
    \label{equ:m1hat}
\end{equation}
where $ \hat{M}_0^{b}=1$ and we can calculate the normalisation factor as
\begin{align*}
 N_k^{-1} = &A_a^{\zeta(X_{k}),\zeta(X_{k-1})} \\&+ \hat{M}_{k-1}^{b} \left[ A_b^{\zeta(X_{k}),\zeta(X_{k-1})}-  A_a^{\zeta(X_{k}),\zeta(X_{k-1})} \right].
\end{align*}
\end{lemma}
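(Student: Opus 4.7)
The plan is to expand the vector recursion \eqref{equ:HMM} coordinate-by-coordinate, exploiting two structural features established in Lemma \ref{lemma:augmentedHMM}: the block-triangular form of $A$, and the sparsity of the diagonal of $B(X_k)$ (only the two diagonal entries at indices $\zeta(X_k)$ and $N+\zeta(X_k)$ are non-zero).

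First, I would observe that because $B(X_k)$ annihilates every other coordinate in \eqref{equ:HMM}, the posterior $\hat{Z}_k$ is supported on exactly the two indices $\zeta(X_k)$ and $N+\zeta(X_k)$ for all $k\ge 1$. Hence $\hat{M}_k^b = \hat{Z}_k^{\zeta(X_k)}$ and $1-\hat{M}_k^b = \hat{Z}_k^{N+\zeta(X_k)}$. Applying the same observation one step earlier, the input vector $\hat{Z}_{k-1}$ to the recursion has only two non-zero entries, with values $\hat{M}_{k-1}^b$ and $1-\hat{M}_{k-1}^b$ at rows $\zeta(X_{k-1})$ and $N+\zeta(X_{k-1})$ respectively.

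Next, using the block form of $A$, I would evaluate $A\hat{Z}_{k-1}$ only at the two rows that survive $B(X_k)$. The top-block row at index $\zeta(X_k)$ evaluates to $(1-\rho)A_b^{\zeta(X_k),\zeta(X_{k-1})}\hat{M}_{k-1}^b$, while the bottom-block row (absolute index $N+\zeta(X_k)$) evaluates to $\rho A_b^{\zeta(X_k),\zeta(X_{k-1})}\hat{M}_{k-1}^b + A_a^{\zeta(X_k),\zeta(X_{k-1})}(1-\hat{M}_{k-1}^b)$. Multiplying by $N_k$ and reading off the first of these quantities yields \eqref{equ:m1hat}. For the normalisation, summing the two unnormalised entries gives $N_k^{-1}$, and the $\rho A_b$ and $(1-\rho)A_b$ contributions combine so that the $\rho$-dependence cancels, producing the stated closed-form expression after rearrangement as $A_a^{\zeta(X_k),\zeta(X_{k-1})} + \hat{M}_{k-1}^b[A_b^{\zeta(X_k),\zeta(X_{k-1})} - A_a^{\zeta(X_k),\zeta(X_{k-1})}]$. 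The initial condition $\hat{M}_0^b = 1$ follows immediately because the second block of $\hat{Z}_0$ is identically zero.

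I do not anticipate a deep obstacle: the argument is essentially careful bookkeeping of the block-row/column structure coupled with a one-step induction on the support of $\hat{Z}_k$. The only point requiring attention is ensuring the inductive hypothesis --- that $\hat{Z}_{k-1}$ has total mass $\hat{M}_{k-1}^b$ concentrated at row $\zeta(X_{k-1})$ and the remaining $1-\hat{M}_{k-1}^b$ at row $N+\zeta(X_{k-1})$ --- which follows directly from the sparsifying action of $B(X_{k-1})$ in the previous step, and holds at the base since $\hat{Z}_0$ has its entire mass in the before block at the index corresponding to the observed $X_0$.
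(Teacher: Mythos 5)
Your proposal is correct and follows essentially the same route as the paper's proof: exploit the sparsity of $\hat{Z}_k$ induced by $B(X_k)$ (only indices $\zeta(X_k)$ and $N+\zeta(X_k)$ survive, so $\hat{M}_k^b=\hat{Z}_k^{\zeta(X_k)}$), read off the two surviving rows of $A\hat{Z}_{k-1}$ using the block-triangular structure, and obtain $N_k^{-1}$ by summing them so that the $\rho$-dependence cancels. The only difference is that you carry out explicitly the algebraic rearrangement the paper delegates to a cited lemma, which is a welcome addition rather than a deviation.
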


\begin{proof}
First note that due to the dependence of $B(X_k)$ on $X_k$, $\hat{Z}_k$ is sparse in the sense that for each $k>0$ only the $\zeta(X_{k})$ and $N+\zeta(X_{k})$ elements are non-zero and hence $\hat{M}_k^{b} = \hat{Z}_k^{\zeta(X_{k})}$.
Therefore it follows by considering the location of non-zero elements of $\hat{Z}_k$ at times $k$ and $k-1$ in update step of \eqref{equ:HMM} that we can write $\hat{M}_k^{b}$ as \eqref{equ:m1hat}.
The expression for $N_k$ follows from noting $\hat{Z}_k^\zeta(X_{k})=1-\hat{Z}_k^{N+\zeta(X_{k})}$ and algebraic re-arrangement (via similar steps to those used in \cite[Lemma 1]{FORD2020OnInformativenessofMeasurements}). 
\end{proof}
Lemma \ref{lem:efficient} provides insights into the posterior filter computational structure and facilitates our analysis in the next section into what happens when chains are too close in a statistical sense.

\section{Insufficiently Separated Markov Chains}
\label{sec:separated}
In this section we will investigate the behaviour of our optimal rule's test statistic $\hat{M}_k^{b}$ in certain situations.
For that purpose, let us introduce the shorthand
$M_k = N_k (1-\rho) A_b^{\zeta(X_{k}),\zeta(X_{k-1})}$ which allows us to write the posterior probability update at time $k$ as $\log(\hat{M}_{k}^{b}) = \log(M_k) + \log(\hat{M}_{k-1}^{b} )$, and establish the following bound on $\log(M_k)$.
\begin{lemma} \label{lem:bound}
Assume $A_b$ and $A_a$ have unique stationary distributions and that the initial distribution $p(X_0)$ is the stationary distribution of $A_b$.
Then for any $\delta > 0$, there is a $h_\delta>0$ such that for any 
$\hat{M}_k^{b}< h _\delta$ we have
\begin{equation*}
    E_\pi  \left[  \log(M_k) \Big|\hat{M}_{k-1}^{b} \right] < \log(1-\rho) + \mathcal{R}( A_b | A_a )  + \delta
\end{equation*}
for sufficiently large $k$.
\end{lemma}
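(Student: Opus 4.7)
The plan is to combine a Jensen-type upper bound on $\log M_k$ obtained from the scalar filter of Lemma~\ref{lem:efficient}, the explicit posterior predictive of $X_k$ given $X_{[0,k-1]}$, and an ergodic averaging argument valid for large $k$. Starting from the recursion I write $\log M_k = \log(1-\rho) + \log A_b^{\zeta(X_k),\zeta(X_{k-1})} - \log N_k^{-1}$, and since $N_k^{-1} = (1-m)A_a^{i,j} + m A_b^{i,j}$ is a convex combination with $m = \hat{M}_{k-1}^b$, concavity of $\log$ yields the key Jensen bound
\[
\log M_k \le \log(1-\rho) + (1-\hat{M}_{k-1}^b)\log\frac{A_b^{\zeta(X_k),\zeta(X_{k-1})}}{A_a^{\zeta(X_k),\zeta(X_{k-1})}}.
\]

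Next I take the conditional expectation over $X_k$ given $X_{[0,k-1]}$. Splitting on $\{\nu \ge k\}$ versus $\{\nu < k\}$, invoking the transition convention in $p_\nu$, and using the memoryless geometric prior identity $P_\pi(\nu \ge k | X_{[0,k-1]}) = \hat{M}_{k-1}^b$, the one-step posterior predictive is the same mixture: $P_\pi(X_k = e_i | X_{[0,k-1]}) = m A_b^{i,j} + (1-m) A_a^{i,j}$ with $j = \zeta(X_{k-1})$. A direct computation gives
\[
E_\pi\!\left[\log\frac{A_b^{\zeta(X_k),j}}{A_a^{\zeta(X_k),j}}\,\bigg|\, X_{[0,k-1]}\right] = m\,D(A_b^{\cdot,j}\|A_a^{\cdot,j}) - (1-m)\,D(A_a^{\cdot,j}\|A_b^{\cdot,j}),
\]
and dropping the non-positive second KL term and using $(1-m)\le 1$ in the Jensen bound produces the pointwise estimate
\[
E_\pi[\log M_k | X_{[0,k-1]}] \le \log(1-\rho) + m\, D(A_b^{\cdot,\zeta(X_{k-1})}\|A_a^{\cdot,\zeta(X_{k-1})}).
\]

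Finally, averaging over $X_{k-1}$ conditional on $\hat{M}_{k-1}^b = m$ gives $E_\pi[\log M_k | \hat{M}_{k-1}^b = m] \le \log(1-\rho) + m\,\bar\phi_{k,m}$ with $\bar\phi_{k,m} = E_\pi[D(A_b^{\cdot,\zeta(X_{k-1})}\|A_a^{\cdot,\zeta(X_{k-1})}) | \hat{M}_{k-1}^b = m]$. Splitting by $\{\nu\ge k\}$ versus $\{\nu<k\}$ and using $P_\pi(\nu\ge k | \hat{M}_{k-1}^b = m) = m$, the assumption $p(X_0) = \tilde a_b$ places $X_{k-1}\sim \tilde a_b$ on the pre-change event so that its contribution to $\bar\phi_{k,m}$ is exactly $\mathcal{R}(A_b|A_a)$, while aperiodicity and irreducibility of $A_a$ force the conditional law of $X_{k-1}$ on the post-change event towards $\tilde a_a$ as $k\to\infty$, giving a bounded contribution $\tilde\Phi_a + \epsilon_k$ with $\tilde\Phi_a = \sum_j \tilde a_a^j D(A_b^{\cdot,j}\|A_a^{\cdot,j})$ and $\epsilon_k\to 0$. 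This yields $\bar\phi_{k,m} \le m\mathcal{R}(A_b|A_a) + (1-m)(\tilde\Phi_a + \epsilon_k)$; multiplying by $m$, majorizing $m^2\mathcal{R}(A_b|A_a)\le \mathcal{R}(A_b|A_a)$ via $m\le 1$, and choosing $h_\delta$ and $k$ so that $m\tilde\Phi_a + m\epsilon_k < \delta$ delivers the claim. The main obstacle will be rigorously justifying the post-change ergodic step: the joint conditioning $\{\nu < k, \hat{M}_{k-1}^b = m\}$ couples the change time, trajectory, and filter output, and I expect to resolve it by truncating over the posterior of $\nu$ (using the geometric prior to bound the posterior mass on $\nu$ close to $k$) together with a spectral-gap or coupling estimate for the $A_a$-driven chain.
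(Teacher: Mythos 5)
Your proof is correct (to the same level of rigor as the paper's own) but takes a genuinely different route at the central step. The paper works with the exact decomposition $\log M_k=\log(1-\rho)+\log\bigl(A_b^{\zeta(X_k),\zeta(X_{k-1})}/A_a^{\zeta(X_k),\zeta(X_{k-1})}\bigr)+\gamma_k$, where $\gamma_k=\log\bigl(A_a^{\zeta(X_k),\zeta(X_{k-1})}N_k\bigr)$ (modulo a sign typo in the paper's definition) is a remainder whose conditional expectation is shown to be $<\delta$ once $\hat M^b_{k-1}<h_\delta$; it then evaluates the log-ratio term exactly through the joint law of $(X_k,X_{k-1})$ given $\hat M^b_{k-1}$, obtaining $\hat M^b_{k-1}\mathcal{R}(A_b|A_a)-(1-\hat M^b_{k-1})\mathcal{R}(A_a|A_b)\le\mathcal{R}(A_b|A_a)$. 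You instead apply Jensen to the convex combination $N_k^{-1}=mA_b^{i,j}+(1-m)A_a^{i,j}$, which eliminates the remainder term entirely and leaves $(1-m)\log(A_b/A_a)$; conditioning on the full past via the predictive mixture then produces the per-column KL divergences carrying a prefactor $m$. Both arguments rest on the same unproved assertion that conditioning on $\hat M^b_{k-1}$ does not distort the conditional law of $X_{k-1}$ on the pre-/post-change events (you at least flag this; the paper asserts it silently). The payoff of your route is that the ``main obstacle'' you worry about is actually a non-issue: since $D(A_b^{\cdot,j}\Vert A_a^{\cdot,j})\le D_{\max}:=\max_j D(A_b^{\cdot,j}\Vert A_a^{\cdot,j})<\infty$ uniformly in $j$, you get $m\,\bar\phi_{k,m}\le m D_{\max}<\delta$ for $m<\delta/D_{\max}$ with no ergodic averaging, no stationary initial distribution, and no ``sufficiently large $k$'' --- so your argument in fact yields the stronger bound $\log(1-\rho)+\delta$ under weaker hypotheses, which trivially implies the stated one. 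The only shared caveat is the implicit assumption that $A_a^{i,j}>0$ whenever $A_b^{i,j}>0$, without which both $\mathcal{R}(A_b|A_a)$ and your KL terms are infinite and the lemma is vacuous.
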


\begin{proof} This proof approach  is similar to the proof \cite[Lemma 2]{FORD2020OnInformativenessofMeasurements}, with the most significant difference being this result involves the relative entropy rate between chains (rather than relative entropy between measurement densities). 
We define 
\begin{equation*}
    \gamma_k=\log\left(A_a^{\zeta(X_{k}),\zeta(X_{k-1})} \right)-\log(N_k).
\end{equation*}
We then note we can write 
\begin{align*}
    E_\pi  \left[  \log(N_k) \Big|\hat{M}_{k-1}^{b} \right]
 = &- E_\pi  \left[  \log\left(A_a^{\zeta(X_{k}),\zeta(X_{k-1})} \right) \Big|\hat{M}_{k-1}^{b} \right] \\ & + E_\pi  \left[  \gamma_k \Big|\hat{M}_{k-1}^{b} \right].
\end{align*}
Hence we can write
\begin{align}
    &E_\pi  \left[   \log(M_k) \Big|\hat{M}_{k-1}^{b} \right] \nonumber \\
    &= \log(1-\rho) + E_\pi  \left[  \log\left(A_a^{\zeta(X_{k}),\zeta(X_{k-1})} \right) \Big|\hat{M}_{k-1}^{b} \right] \nonumber \\ &\quad - E_\pi  \left[  \log\left(A_a^{\zeta(X_{k}),\zeta(X_{k-1})} \right) \Big|\hat{M}_{k-1}^{b} \right]  \nonumber \\ &\quad + E_\pi  \left[  \gamma_k \Big|\hat{M}_{k-1}^{b} \right] 
    \nonumber \\
    &= \log(1-\rho) + E_\pi  \left[  \log\left( \frac{A_b^{\zeta(X_{k}),\zeta(X_{k-1})}}{A_a^{\zeta(X_{k}),\zeta(X_{k-1})}} \right) \Bigg|\hat{M}_{k-1}^{b} \right] \nonumber \\ &\quad + E_\pi  \left[  \gamma_k \Big|\hat{M}_{k-1}^{b} \right].
\label{equ:bound}
\end{align}

It will soon be useful to note that
\begin{align*}
    P_\pi( X_k \in S^b, X_{k-1} \in S^b  | \hat{M}_{k-1}^{b}) &= (1-\rho) \hat{M}_{k-1}^{b}, \\
    P_\pi( X_k \in S^a, X_{k-1} \in S^b  | \hat{M}_{k-1}^{b}) &= \rho \hat{M}_{k-1}^{b}, 
    \\
    P_\pi( X_k \in S^a, X_{k-1} \in S^a  | \hat{M}_{k-1}^{b}) &= 1 - \hat{M}_{k-1}^{b},
\end{align*}
that under the lemma assumptions on $p(X_0)$ we can write
\begin{align*}
    &P_\pi( X_k=e_i,X_{k-1}=e_j| X_k \in S^b, X_{k-1} \in S^b  , \hat{M}_{k-1}^{b}) \\&\quad= A_b^{i,j} P_b (X_{k-1} = e_j) \quad \mbox{and}\\
    &P_\pi( X_k=e_i,X_{k-1}=e_j| X_k \in S^a, X_{k-1} \in S^b  , \hat{M}_{k-1}^{b}) \\&\quad= A_b^{i,j} P_b (X_{k-1} = e_j), 
\end{align*}
and that for sufficient large $k$ we can write 
\begin{align*}
    &P_\pi( X_k=e_i,X_{k-1}=e_j| X_k \in S^a, X_{k-1} \in S^a  , \hat{M}_{k-1}^{b}) \\&\quad= A_a^{i,j} P_a (X_{k-1} = e_j) 
\end{align*}
where $P_b( X_{k-1} = e_j )$ and $P_a( X_{k-1} = e_j )$ are the stationary distributions of the before and after change models, respectively.

Then application of the law of total probability and Bayes' rule gives that, for sufficient large $k$, we can write
\begin{align}
     & P_\pi( X_k=e_i,X_{k-1}=e_j | \hat{M}_{k-1}^{b}) = \nonumber \\ & P_\pi( X_k=e_i,X_{k-1}=e_j, X_k \in S^b, X_{k-1} \in S^b  | \hat{M}_{k-1}^{b}) \nonumber \\ & + P_\pi( X_k=e_i,X_{k-1}=e_j, X_k \in S^a, X_{k-1} \in S^b  | \hat{M}_{k-1}^{b}) \nonumber \\ & + P_\pi( X_k=e_i,X_{k-1}=e_j, X_k \in S^a, X_{k-1} \in S^a | \hat{M}_{k-1}^{b}) \nonumber \\
     &= P_b (X_{k-1}=e_j) A_b^{i,j} \rho  \hat{M}_{k-1}^{b} \nonumber \\&\quad+ P_b (X_{k-1}=e_j) A_b^{i,j} (1-\rho)  \hat{M}_{k-1}^{b} \nonumber \\ &\quad+ P_a (X_{k-1}=e_j) A_a^{i,j} (1- \hat{M}_{k-1}^{b} ) \nonumber \\
     &= P_b (X_{k-1}=e_j) A_b^{i,j}   \hat{M}_{k-1}^{b} \nonumber \\ &\quad+ P_a (X_{k-1}=e_j) A_a^{i,j} (1- \hat{M}_{k-1}^{b} ). \label{equ:pexpand}
\end{align}

For sufficiently large $k$, we can now expand the second term of \eqref{equ:bound} as
\begin{align}
    &E_\pi  \left[  \log\left( \frac{A_b^{\zeta(X_{k}),\zeta(X_{k-1})}}{A_a^{\zeta(X_{k}),\zeta(X_{k-1})}} \right) \Bigg|\hat{M}_{k-1}^{b} \right] \nonumber \\ 
    &=\sum_{i=1}^N  \sum_{j=1}^N \log\left(\frac{A_b^{i,j}}{A_a^{i,j}} \right) P_\pi( X_k=e_i,X_{k-1}=e_j | \hat{M}_{k-1}^{b}) \nonumber \\
    &= \hat{M}_{k-1}^{b} \sum_{i=1}^N  \sum_{j=1}^N  P_b (X_{k-1}=e_j) A_b^{i,j}  \log\left( \frac{A_b^{i,j}}{A_a^{i,j}} \right) \nonumber \\ & \quad+  (1- \hat{M}_{k-1}^{b} ) \sum_{i=1}^N  \sum_{j=1}^N   P_a (X_{k-1}=e_j) A_a^{i,j}  \log\left(\frac{A_b^{i,j}}{A_a^{i,j}} \right) \nonumber \\
    &= \hat{M}_{k-1}^{b} \sum_{i=1}^N  \sum_{j=1}^N  P_b (X_{k-1}=e_j) A_b^{i,j}  \log\left(\frac{A_b^{i,j}}{A_a^{i,j}} \right) \nonumber \\ & \quad-  (1- \hat{M}_{k-1}^{b} ) \sum_{i=1}^N  \sum_{j=1}^N   P_a (X_{k-1}=e_j) A_a^{i,j}  \log\left(\frac{A_a^{i,j}}{A_b^{i,j}} \right) \nonumber \\
    & = \hat{M}_{k-1}^{b}  \mathcal{R}( A_b | A_a ) - (1-\hat{M}_{k-1}^{b}) \mathcal{R}( A_a | A_b) \nonumber \\ & \le \mathcal{R}( A_b | A_a ) 
    \label{equ:d}
\end{align}
where the first line follows from the definition of expectation operation, the 2nd line follows from application of \eqref{equ:pexpand}, the second last line follows from the definition of relative entropy rate between two chains and the last line follows because $\hat{M}_{k-1}^{b}\le 1$ and relative entropy rates such as $ \mathcal{R}( A_a | A_b)$ are non-negative.

The Lemma statement then follows from \eqref{equ:bound} and \eqref{equ:d} by noting that for any $\delta<0$ there is a $h_\delta>0$ such that for all $\hat{M}_{k-1} < h_\delta$ we have $E_\pi  \left[  \gamma_k \Big|\hat{M}_{k-1}^{b} \right]< \delta$ as $\log$ is monotonically increasing and the elements $A_b^{i,j}$ and $A_a^{i,j}$ are bounded in size (no greater than 1).
\end{proof}
 
This Lemma \ref{lem:bound} bound seems strikingly similar to the bounding result of Lemma 2 in \cite{FORD2020OnInformativenessofMeasurements} with relative entropy rate between chains replacing the role of relative entropy between measurement densities in that result.

The lemma's assumptions on  $p(X_0)$ being the stationary distribution of $A_b$ and the requirement for sufficiently largely $k$ have been included to simplify analysis
rather than being fundamental to the bounding mechanism.
We would expect similar bounded behaviour under relaxation of these two assumptions.
The lemma's assumption of the existence of stationary distributions is ensured under this paper's standing assumption of aperiodic and irreducible chains \cite[Ch. 4]{Cover2006}.
Whilst such ergodic chains are a large and useful class, the lemma does exclude non-ergodic chains such as those with transient states, absorbing states or chains that exhibit periodic behaviours. 

We now investigate a phenomenon that occurs when the Markov chains $A_b$ and $A_a$ are too close in the sense of having small relative entropy rate $\mathcal{R}(A_b|A_a)$ and are unable to overcome the change event's geometric prior information. 
For this purpose, consider the following concept of a weak practical super-martingale \cite{FORD2020OnInformativenessofMeasurements}:

\begin{definition} \label{def:wpsm}
(Weak Practical Super-martingale). If for any arbitrarily small $\delta_p>0$ there exists a $h_s>0$ such that if $M^b_k< h_s$ then 
\begin{align*}
    P_\pi\left(\mbox{for all } n \ge k, E_\pi [ \log(\hat{M}^b_{n+1}) | \log(\hat{M}^b_n) ]  <  \log(\hat{M}^b_n) \right) &  \\  & \hspace{-5em} > 1-\delta_p
\end{align*}
and the log of the no change posterior $\log(M^b_k)$ is called a weak practical super-martingale.
\end{definition}

The following theorem now holds.
\begin{theorem} \label{thm:informativeness}
Assume $A_b$ and $A_a$ have unique stationary distributions and that the initial distribution $p(X_0)$ is the stationary distribution of $A_b$.
If the chains are insufficiently separated in the sense that
\begin{equation}
   \mathcal{R}( A_b | A_a ) < \log(1/(1-\rho))
   \label{eq:informativebound}
\end{equation}
then for sufficiently large $k$ the log of no change posterior, $\log(\hat{M}_k^{b})$, is a weak practical super-martingale in the sense of Definition \ref{def:wpsm}.
\end{theorem}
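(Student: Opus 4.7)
The plan is to combine the drift inequality of Lemma~\ref{lem:bound} with a stopping-time / maximal-inequality argument. The idea is to show that once $\hat{M}_n^b$ drops below some level $h_\delta$, the log-posterior has strict negative drift and hence satisfies the one-step super-martingale inequality, and then to show that if we start from $\hat{M}_k^b < h_s$ for a sufficiently small $h_s$, the trajectory stays below $h_\delta$ for all $n \ge k$ with probability at least $1-\delta_p$. The insufficient-separation hypothesis \eqref{eq:informativebound} is precisely what guarantees the drift is strictly negative.

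Concretely, I would first set $\eta \triangleq \log(1/(1-\rho)) - \mathcal{R}(A_b|A_a)$, which is strictly positive by \eqref{eq:informativebound}, and apply Lemma~\ref{lem:bound} with $\delta = \eta/2$. This yields an $h_\delta > 0$ such that, for sufficiently large $n$ and every $\hat{M}_n^b < h_\delta$,
\[
  E_\pi\bigl[\log(M_{n+1}) \,\big|\, \hat{M}_n^b\bigr] < \log(1-\rho) + \mathcal{R}(A_b|A_a) + \eta/2 = -\eta/2.
\]
Combined with the identity $\log(\hat{M}_{n+1}^b) = \log(M_{n+1}) + \log(\hat{M}_n^b)$ and the fact that $\sigma(\hat{M}_n^b) = \sigma(\log(\hat{M}_n^b))$, this immediately yields
\[
  E_\pi\bigl[\log(\hat{M}_{n+1}^b) \,\big|\, \log(\hat{M}_n^b)\bigr] < \log(\hat{M}_n^b) - \eta/2 < \log(\hat{M}_n^b)
\]
almost surely on the event $\{\hat{M}_n^b < h_\delta\}$. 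Introducing the first exit time $\tau \triangleq \inf\{n \ge k : \hat{M}_n^b \ge h_\delta\}$, the event appearing in Definition~\ref{def:wpsm} contains $\{\tau = \infty\}$, so it suffices to arrange $P_\pi(\tau < \infty) < \delta_p$ by choosing $h_s$ small enough.

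For this remaining step, I would observe that the compensated stopped process $\log(\hat{M}_{n \wedge \tau}^b) + (n \wedge \tau)\eta/2$ is a super-martingale, so an optional-stopping / Doob maximal-inequality argument controls $P_\pi(\tau < \infty)$ in terms of the gap $\log(h_\delta) - \log(h_s)$; enlarging this gap by taking $h_s$ sufficiently small drives the bound below any prescribed $\delta_p$, closing the argument. The main obstacle is making this last bound rigorous: $\log(\hat{M}_n^b)$ takes only non-positive values (so Doob's non-negative supermartingale inequality does not apply directly), and the per-step log-increments $\log(M_n)$ need not be uniformly bounded because ratios of the form $A_b^{i,j}/A_a^{i,j}$ can be very large or very small. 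I would expect to handle this in the same way as the analogous i.i.d. argument in \cite{FORD2020OnInformativenessofMeasurements}, either by exploiting ergodicity of the chains to control visited transition ratios along the path, or by a localisation/truncation of $\tau$ followed by a routine limiting argument. Once the overshoot at $\tau$ is controlled, the rest of the argument is standard.
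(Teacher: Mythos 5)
Your proposal is correct and follows essentially the same route as the paper, whose proof simply invokes Lemma~\ref{lem:bound} and defers the remaining steps to the corresponding theorem in \cite{FORD2020OnInformativenessofMeasurements}; your drift-plus-exit-time argument (apply Lemma~\ref{lem:bound} with $\delta = \eta/2$ to get strictly negative one-step drift below $h_\delta$, then choose $h_s$ small enough that the trajectory escapes above $h_\delta$ with probability less than $\delta_p$) is precisely the structure of that cited proof. The technical obstacle you flag about unbounded log-increments is handled there exactly as you anticipate, so no new gap is introduced.
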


\begin{proof}
Given result of Lemma \ref{lem:bound}, the theorem claim follows using the same proof steps as \cite[Thm. 4.]{FORD2020OnInformativenessofMeasurements}.
\end{proof}

The significance of this theorem is that unless the relative entropy rate between $A_b$ and $A_a$ is sufficiently large then the no change posterior $\log(\hat{M}_k^{b})$ is a weak practical super-martingale and hence there is a posterior interval trap $\hat{M}_k^{b}<h_s$ where the no change posterior $\hat{M}_k^{b}$ becomes increasingly confident that a change has occurred even when it has not occurred. 
That the test statistic can exhibit such behaviour if the before and after chain models are close is problematic, and can be interpreted as meaning that the models are insufficiently different to overcome the geometric prior.
The potential for this behaviour is an important design consideration. One practical remedy to avoid the super-martingale phenomenon might be to (artificially) reduce the value assumed for the geometric prior by a sufficient amount to ensure that the theorem condition no longer holds, and so that the no change posterior $\log(\hat{M}_k^{b})$ test statistic behaviour is a better indication of change status.

\section{Simulation study} \label{sec:results}
In this section we first illustrate the performance of our proposed optimal stopping rule in simulation example before examining the weak practical super-martingale phenomenon in some detail.

\subsection{Illustrative example} \label{sec:results:numerical}
Let us consider a three state Markov chain $X_k \in S = \{e_1, e_2, e_3\}$ with before and after transition probability matrices:
\begin{equation*}
    A_b = 
    \begin{bmatrix}
    0.99 & 0.005 & 0.005 \\ 0.005 & 0.99 & 0.005 \\ 0.005 & 0.005 & 0.99
    \end{bmatrix}
    \textrm{and} 
    \ A_a = 
    \begin{bmatrix}
    0.8 & 0.1 & 0.1 \\ 0.1 & 0.8 & 0.1 \\ 0.1 & 0.1 & 0.8 
    \end{bmatrix}.
\end{equation*}
The change event is assumed to have geometric prior $\rho = 0.005$.

Following Lemma~\ref{lemma:augmentedHMM}, we can consider a six state augmented process $Z_k$ with transition probability matrix
\begin{equation*}
    A = \begin{bmatrix}
        0.98505 & 0.004975 & 0.004975 & 0 & 0 & 0 \\
        0.004975 & 0.98505 & 0.004975 & 0 & 0 & 0 \\
        0.004975 & 0.004975 & 0.98505 & 0 & 0 & 0 \\
        0.00495 & 0.000025 & 0.000025 & 0.8 & 0.1 & 0.1 \\
        0.000025 & 0.00495 & 0.000025 & 0.1 & 0.8 & 0.1 \\
        0.000025 & 0.000025 & 0.00495 & 0.1 & 0.1 & 0.8
    \end{bmatrix} .
\end{equation*}

The top and middle sub figures of Figure~\ref{fig:toyproblem:states} shows a simulated example of the state of the augmented process $Z_k$ and Markov chain $X_k$, respectively.  
The change in the statistical properties of the measured process $X_k$ after change point $\nu=1000$ is visually apparent.
As shown in the bottom sub figure of Figure~\ref{fig:toyproblem:states}, our optimal stopping rule \eqref{equ:stoprule} is able to alert of the change when test statistic $\hat{M}_k^b$ crosses the alert threshold (say) $h=0.4$ at $k=1027$.

\begin{figure}
\centering
\includegraphics[scale=0.6,trim={0.0cm 0cm 0cm 0.0cm},clip]{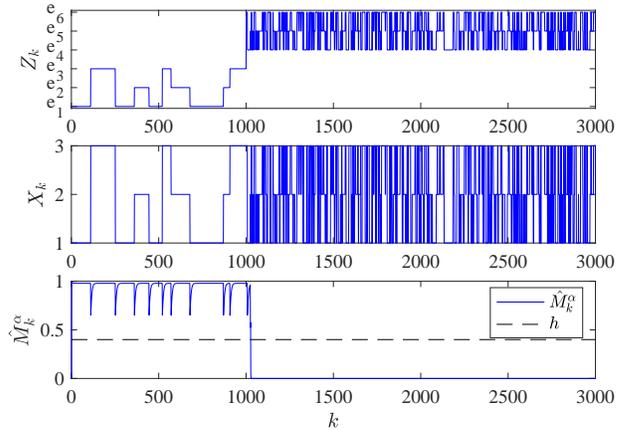}
\caption{Illustrative example of the augmented state process across two Markov chain each of three states. The top figure shows the current state of the augmented HMM. The middle figure is the measurements of the three states. The bottom figure is the test statistic with detection threshold marked as the dashed line.}
\label{fig:toyproblem:states}
\end{figure}

\subsection{Insufficiently Separated Markov Chains}
We now briefly investigate insufficiently separated Markov chains through a parametric study.
Consider a symmetric two state Markov chain with before and after change transition probability matrices:
\begin{equation*}
    A_b = \begin{bmatrix}
    0.99 & 0.01 \\ 0.01 & 0.99
    \end{bmatrix}, \quad \textrm{and} \quad
    A_a = \begin{bmatrix}
    a & (1-a) \\ (1-a) & a
    \end{bmatrix},
\end{equation*}
where we explore $a \in [0.84, 0.99]$ and assume fixed geometric prior $\rho = 0.005$.

We perform a Monte Carlo simulation of $1000$ trials for each $a \in [0.84, 0.99]$ for $5000$ time steps of the before change model (i.e. no change event occurs).

Figure~\ref{fig:informativeness:frequencyposterior} is the frequency of the no change posterior $\hat{M}_k^b$ larger than threshold of $h=0.001$ at $k=5000$ over $1000$ trials.
This illustrates that under the conditions of Theorem~\ref{thm:informativeness}, which provides that the weak practical super-martingale phenomenon occurs for $a \in (0.977, 0.99)$, the test statistic $\hat{M}_k^b$ becomes increasingly confident a change event has occurred, even though it has not.

\begin{figure}
    \centering
    \includegraphics[scale=0.6,trim={0.0cm 0cm 0cm 0.0cm},clip]{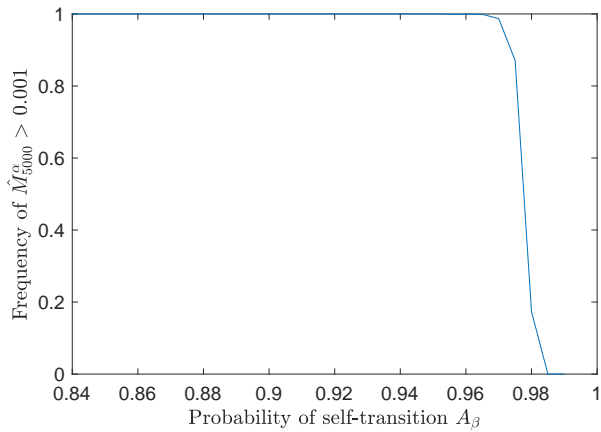}
    \caption{Monte Carlo simulation of no change scenario with two symmetric Markov chains. Frequency of the no change posterior probability $\hat{M}_k^b$ at $k=5000$ larger than threshold of $h=0.001$ over $1000$ trials.}
    \label{fig:informativeness:frequencyposterior}
\end{figure}

Figure~\ref{fig:informativeness:realisations} illustrates two realisations with the self-transition of $A_a$ exhibiting (dashed red line) and not exhibiting (solid blue line) the weak practical super-martingale phenomenon.
Where $a=0.985$ inside the conditions of Theorem~\ref{thm:informativeness}, the test statistic, or the posterior probability of the no change scenario, $\hat{M}_k^b$ becomes more confident a change event has occurred, even though it has not.

\begin{figure}
    \centering
    \includegraphics[scale=0.6,trim={0.0cm 0cm 0cm 0.0cm},clip]{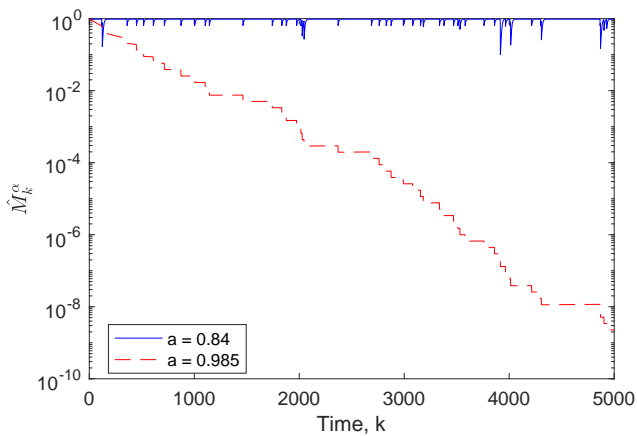}
    \caption{Two realisations from the Monte Carlo simulation of the no change scenario with two symmetric Markov chains. No change posterior probability with self-transition of $A_a$ outside the range $a=0.84$ (solid blue line), and exhibiting the weak practical super-martingale phenomenon $a=0.985$ (dashed red line).}
    \label{fig:informativeness:realisations}
\end{figure}

\section{Conclusions} \label{sec:con}
This  paper  established  an  exactly optimal  rule  for Bayesian Quickest Change Detection 
of Markov chains involving  a  threshold  test  on  the  no change  posterior. We provided an efficient computational approach. 
We also established an (undesirable) weak practical super-martingale phenomenon can be exhibited by the no change posterior when the before and after chains are too close in a statistical sense.  The potential for this phenomenon is an important design consideration.

\addtolength{\textheight}{-12cm}   




\balance

\bibliographystyle{IEEEtran}
\bibliography{IEEEabrv,ref}

\begin{thebibliography}{10}
\providecommand{\url}[1]{#1}
\csname url@samestyle\endcsname
\providecommand{\newblock}{\relax}
\providecommand{\bibinfo}[2]{#2}
\providecommand{\BIBentrySTDinterwordspacing}{\spaceskip=0pt\relax}
\providecommand{\BIBentryALTinterwordstretchfactor}{4}
\providecommand{\BIBentryALTinterwordspacing}{\spaceskip=\fontdimen2\font plus
\BIBentryALTinterwordstretchfactor\fontdimen3\font minus
  \fontdimen4\font\relax}
\providecommand{\BIBforeignlanguage}[2]{{%
\expandafter\ifx\csname l@#1\endcsname\relax
\typeout{** WARNING: IEEEtran.bst: No hyphenation pattern has been}%
\typeout{** loaded for the language `#1'. Using the pattern for}%
\typeout{** the default language instead.}%
\else
\language=\csname l@#1\endcsname
\fi
#2}}
\providecommand{\BIBdecl}{\relax}
\BIBdecl

\bibitem{nikiforov}
M.~Basseville and I.~Nikiforov, \emph{Detection of Abrupt Change Theory and
  Application}.\hskip 1em plus 0.5em minus 0.4em\relax Prentice-Hall, 04 1993,
  vol.~15.

\bibitem{Vaswani}
N.~{Vaswani}, ``Additive change detection in nonlinear systems with unknown
  change parameters,'' \emph{IEEE Transactions on Signal Processing}, vol.~55,
  no.~3, pp. 859--872, 2007.

\bibitem{tartakovsky2019asymptotically}
A.~G. Tartakovsky, ``Asymptotically optimal quickest change detection in
  multistream data—part 1: General stochastic models,'' \emph{Methodology and
  Computing in Applied Probability}, vol.~21, no.~4, pp. 1303--1336, 2019.

\bibitem{James2019}
J.~James, J.~J. Ford, and T.~L. Molloy, ``Quickest detection of intermittent
  signals with application to vision-based aircraft detection,'' \emph{IEEE
  Transactions on Control Systems Technology}, vol.~27, no.~6, pp. 2703--2710,
  2019.

\bibitem{Basseville}
M.~Basseville, ``{Detecting changes in signals and systems - A survey},''
  \emph{Automatica}, vol.~24, no.~3, pp. 309--326, May 1988.

\bibitem{ShiryaevOS}
A.~N. Shiryaev, \emph{{Optimal Stopping Rules}}.\hskip 1em plus 0.5em minus
  0.4em\relax Springer-Verlag Berlin Heidelberg, 2008, vol.~8.

\bibitem{Tartakovsky05}
A.~Tartakovsky and V.~Veeravalli, ``\BIBforeignlanguage{English (US)}{General
  asymptotic {B}ayesian theory of quickest change detection},''
  \emph{\BIBforeignlanguage{English (US)}{Theory of Probability and its
  Applications}}, vol.~49, no.~3, pp. 458--497, 2005.

\bibitem{KrishnamurthyPhaseTime}
V.~Krishnamurthy, ``Bayesian sequential detection with phase-distributed change
  time and nonlinear penalty—a pomdp lattice programming approach,''
  \emph{IEEE Transactions on Information Theory}, vol.~57, no.~10, pp.
  7096--7124, 2011.

\bibitem{Dayanik}
S.~Dayanik and C.~Goulding, ``Sequential detection and identification of a
  change in the distribution of a {M}arkov-modulated random sequence,''
  \emph{IEEE Transactions on Information Theory}, vol.~55, no.~7, pp.
  3323--3345, July 2009.

\bibitem{Fuh}
C.~{Fuh} and A.~G. {Tartakovsky}, ``Asymptotic {B}ayesian theory of quickest
  change detection for hidden {M}arkov models,'' \emph{IEEE Transactions on
  Information Theory}, vol.~65, no.~1, pp. 511--529, Jan 2019.

\bibitem{Ford2022OptimalBayesianQuickest}
\BIBentryALTinterwordspacing
J.~J. Ford, J.~James, and T.~L. Molloy, ``Exactly optimal {B}ayesian quickest
  change detection for hidden {M}arkov models,'' 2021. [Online]. Available:
  \url{https://arxiv.org/abs/2009.00150}
\BIBentrySTDinterwordspacing

\bibitem{SIRSIS2011}
R.~Yaesoubi and T.~Cohen, ``Generalized {M}arkov models of infectious disease
  spread: A novel framework for developing dynamic health policies,''
  \emph{European Journal of Operational Research}, vol. 215, no.~3, pp.
  679--687, 2011.

\bibitem{Rumour2017}
G.~Ferraz~de Arruda, F.~Aparecido~Rodrigues, P.~Martín~Rodríguez, E.~Cozzo,
  and Y.~Moreno, ``{A general {M}arkov chain approach for disease and rumour
  spreading in complex networks},'' \emph{Journal of Complex Networks}, vol.~6,
  no.~2, pp. 215--242, 08 2017.

\bibitem{SocialMedia2016}
R.~Yu, H.~Qiu, Z.~Wen, C.~Lin, and Y.~Liu, ``A survey on social media anomaly
  detection,'' \emph{ACM SIGKDD Explorations Newsletter}, vol.~18, no.~1, pp.
  1--14, 2016.

\bibitem{AaronMC}
L.~Faulkner and A.~McFadyen, ``Air traffic configuration modelling and dynamic
  airspace allocation using discrete-time {M}arkov chains,'' in \emph{2019 IEEE
  Intelligent Transportation Systems Conference (ITSC)}, 2019, pp. 4483--4488.

\bibitem{yakir}
B.~Yakir, \emph{Optimal detection of a change in distribution when the
  observations form a Markov chain with a finite state space}, ser. Lecture
  Notes--Monograph Series.\hskip 1em plus 0.5em minus 0.4em\relax Hayward, CA:
  Institute of Mathematical Statistics, 1994, vol. Volume 23, pp. 346--358.

\bibitem{Xie2005}
L.~Xie, V.~Ugrinovskii, and I.~Petersen, ``Probabilistic distances between
  finite-state finite-alphabet hidden {M}arkov models,'' \emph{IEEE
  Transactions on Automatic Control}, vol.~50, no.~4, pp. 505--511, 2005.

\bibitem{Shiryaev}
A.~N. Shiryaev, ``On optimum methods in quickest detection problems,''
  \emph{Theory of Probability \& Its Applications}, vol.~8, no.~1, pp. 22--46,
  1963.

\bibitem{FORD2020OnInformativenessofMeasurements}
J.~J. Ford, J.~James, and T.~L. Molloy, ``On the informativeness of
  measurements in {S}hiryaev’s {B}ayesian quickest change detection,''
  \emph{Automatica}, vol. 111, p. 108645, 2020.

\bibitem{elliott1995}
R.~Elliott, L.~Aggoun, and J.~Moore, \emph{Hidden {M}arkov Models: Estimation
  and Control}.\hskip 1em plus 0.5em minus 0.4em\relax Springer-Verlag, 1995.

\bibitem{Cover2006}
T.~M. Cover and J.~A. Thomas, \emph{Elements of Information Theory 2nd Edition
  (Wiley Series in Telecommunications and Signal Processing)}.\hskip 1em plus
  0.5em minus 0.4em\relax Wiley-Interscience, July 2006.

\end{thebibliography}

\end{document}